\documentclass[conference]{IEEEtran}
\IEEEoverridecommandlockouts

\usepackage{cite}
\usepackage{amsmath,amssymb,amsfonts}
\usepackage{amsthm}
\usepackage{algorithmic}
\usepackage{graphicx}
\usepackage{textcomp}
\usepackage{xcolor}
\usepackage{booktabs}
\usepackage{multirow}

\usepackage{amsthm}

\usepackage{caption}
\newtheorem{theorem}{Theorem}

\newtheorem{proposition}{Proposition}
\newtheorem{corollary}{Corollary}

\theoremstyle{remark}
\newtheorem{remark}{Remark}

\def\BibTeX{{\rm B\kern-.05em{\sc i\kern-.025em b}\kern-.08em
    T\kern-.1667em\lower.7ex\hbox{E}\kern-.125emX}}

\begin{document}

\title{Rethinking Radiomap Blind Prediction with Limited Environment and Configuration Representations}

\author{
	\IEEEauthorblockN{
		Xiaojie Li\IEEEauthorrefmark{1}\IEEEauthorrefmark{2}, 
		Yu Han\IEEEauthorrefmark{1}\IEEEauthorrefmark{2}, 
		Han Fang\IEEEauthorrefmark{3},
        Shangqing Liu\IEEEauthorrefmark{4},
		Shi Jin\IEEEauthorrefmark{1}\IEEEauthorrefmark{2},
		and Chao-Kai Wen\IEEEauthorrefmark{5}
	\IEEEauthorblockA{\IEEEauthorrefmark{1}School of Information Science and Engineering, Southeast University, Nanjing 210096, China}
    \IEEEauthorblockA{\IEEEauthorrefmark{2}National Mobile Communications Research Laboratory, Southeast University, Nanjing 210096, China}
	\IEEEauthorblockA{\IEEEauthorrefmark{3} School of Cyber Science and Engineering, Southeast University, Nanjing 210096, China}
    \IEEEauthorblockA{\IEEEauthorrefmark{4} State Key Laboratory of Novel Software Technology, Nanjing University, Nanjing 210023, China}
	\IEEEauthorblockA{\IEEEauthorrefmark{5}Institute of Communications Engineering, National Sun Yat-sen University, Kaohsiung 80424, Taiwan\\ Email: xiaojieli@seu.edu.cn, hanyu@seu.edu.cn, h\_fang@seu.edu.cn,\\ shangqingliu@nju.edu.cn, jinshi@seu.edu.cn and chaokai.wen@mail.nsysu.edu.tw}}
    \thanks{This work was supported in part by the National Natural Science Foundation of China (NSFC) under Grants 62422105 and U25A20392, in part by the Key Technologies RD Program of Jiangsu (Prospective and Key Technologies for Industry) under Grants BE2023022, BE2023022-1 and BE2023022-2, in part by the Natural Science Foundation of Jiangsu Province under Grant BK20230824, and in part by the Postgraduate Research \& Practice Innovation Program of Jiangsu Province under Grant 26CXJH0640 (Corresponding authors: Yu Han and Shi Jin).}
    }

\maketitle

\begin{abstract}
Radiomap blind prediction infers radiomaps from observable representations of the propagation environment and base station (BS) configuration without field measurements. These representations are inherently incomplete and cannot uniquely determine the target radiomap. Under squared loss, we identify the conditional-mean radiomap as the population-optimal deterministic target and decompose domain risk into target-approximation error and irreducible uncertainty. The train-test risk gap motivates propagation priors as cross-domain guidance, although their partial or simplified forms may bias the attainable predictor. We therefore propose RadioDecomp, which treats a prior-guided predictor as a correctable base and uses deterministic residual refinement to learn its remaining predictable discrepancy. We instantiate RadioDecomp as RadioLSR (LoS-Shadow-Residual). Experiments under cross-configuration and cross-environment settings show that RadioLSR is especially effective for cross-configuration generalization and provides overall gains over a controlled monolithic counterpart under cross-environment generalization.
\end{abstract}

\begin{IEEEkeywords}
radiomap, incomplete observation, generalization, residual learning, wireless propagation
\end{IEEEkeywords}
\section{Introduction}

Future 6G networks are expected to rely on environment-aware radio intelligence, creating strong demand for radiomaps that characterize the spatial distribution of radio attributes such as received power and angle of arrival \cite{zeng2024tutorial}. Such radiomaps are valuable for wireless network planning, optimization, and predictive radio management. In many practical scenarios, the target radiomap must be inferred without field measurements. This gives rise to \emph{radiomap blind prediction}, which predicts radiomaps only from observable representations of the environment and base station (BS) configuration \cite{bi2019engineering,li2026u6gxlmimoradiomapprediction}.

Existing studies have mainly approached this task by learning direct predictive mappings, including environment-driven methods such as RadioUNet \cite{radiounet}, RME-GAN \cite{rmegan}, and RadioDiff \cite{radiodiff}, configuration-aware methods that incorporate BS-side operational parameters \cite{11268973}, and more recent settings with jointly varying environments and BS configurations \cite{li2026u6gxlmimoradiomapprediction,11063460}. Beyond these learning-based approaches, ray-tracing-based simulation \cite{11421037} can be viewed as a physics-driven counterpart when the environment and BS configuration are fully specified. However, only partial environment and configuration representations are available in practical blind prediction.

A related but distinct line of work studies radiomap estimation from sparse radiomap measurements \cite{10542642}. Although noisy environment information is considered in \cite{jaensch2026radiomappredictionnoisy}, such methods still rely on measurement support and therefore do not directly characterize blind radiomap prediction under incomplete observation.

This raises a fundamental question: what does a deterministic blind predictor learn under incomplete observation? Under squared loss, we identify the conditional-mean radiomap as its population-optimal target and decompose domain risk into target-approximation error and irreducible uncertainty, whose domain-wise changes determine the train-test risk gap. Propagation priors provide natural cross-domain guidance for limiting the predictor-dependent approximation gap, but their partial or simplified forms may bias the attainable predictor. We therefore propose \emph{RadioDecomp}, which retains a prior-guided TIC base and uses a deterministic RIC to correct its predictable discrepancy, and instantiate it as \emph{RadioLSR}. Experiments on a multi-configuration and multi-environment radiomap dataset \cite{li2026u6gxlmimoradiomapprediction} show consistent gains under cross-configuration prediction and overall benefits under cross-environment prediction.

\section{Problem Formulation and Analysis}
\subsection{Problem Formulation under Incomplete Observation}

Let \(E\) denote the complete physical environment and \(C\) the complete BS configuration. The former includes propagation-relevant scene information, such as geometry and material properties, while the latter specifies the BS location and transmission setup, including the antenna array, carrier frequency, and beamforming configuration. A radiomap is a discretized spatial representation of a radio attribute, such as received power, angle of arrival, or angle of departure. In the 2D setting considered here, a radiomap realization is represented by \(\mathbf y\in\mathbb R^{H\times W}\); the same formulation can be extended to voxelized 3D radiomaps \cite{10963917,11455177}.

When \(E\) and \(C\) are fully specified, radiomap generation can be expressed as
\begin{equation}
\mathbf y=F(E,C),
\label{eq:full_generation}
\end{equation}
where \(F\) denotes the deterministic physical propagation mechanism.

Radiomap blind prediction differs because the predictor receives only incomplete representations of the environment and BS configuration. Let \(\mathbf Z=(E_{\mathrm{obs}},C_{\mathrm{obs}})\) denote the observable representation, and let \(\mathbf U\) collect the remaining hidden propagation factors. The complete physical state can then be partitioned into \((\mathbf Z,\mathbf U)\), such that
\begin{equation}
\mathbf Y=F(\mathbf Z,\mathbf U).
\label{eq:incomplete_generation}
\end{equation}

For a given observable input \(\mathbf z\), multiple hidden states may be compatible with it and produce different radiomaps. Accordingly,
\begin{equation}
\mathbf Y\mid(\mathbf Z=\mathbf z)
=
F(\mathbf z,\mathbf U),
\qquad
\mathbf U\sim p(\cdot\mid\mathbf z).
\label{eq:conditional_generation}
\end{equation}

Although \(\mathbf U\) is unobserved, a deterministic blind predictor must return one radiomap for each \(\mathbf z\). Under squared loss, its population-optimal output is
\begin{equation}
\begin{aligned}
m^\star(\mathbf z)
&:=
\arg\min_{\widehat{\mathbf y}}
\mathbb E\!\left[
\left\|\mathbf Y-\widehat{\mathbf y}\right\|_{\mathrm F}^{2}
\mid \mathbf Z=\mathbf z
\right]
\\
&=
\mathbb E[\mathbf Y\mid\mathbf Z=\mathbf z]
=
\mathbb E_{\mathbf U\sim p(\cdot\mid\mathbf z)}
\left[F(\mathbf z,\mathbf U)\right].
\end{aligned}
\label{eq:population_conditional_mean}
\end{equation}

Thus, deterministic blind prediction under squared loss seeks to approximate the conditional-mean radiomap \(m^\star(\mathbf z)\), rather than the radiomap associated with any particular realization of the hidden physical factors.

\subsection{Domain-Wise Risk Decomposition and Train-Test Gap}
Let \(d\in\{\mathrm{tr},\mathrm{te}\}\) index the training and test domains, respectively. The population-optimal deterministic predictor in domain \(d\) is
\begin{equation}
m_d^\star(\mathbf z)
:=
\mathbb E_d[\mathbf Y\mid\mathbf Z=\mathbf z]
=
\mathbb E_{\mathbf U\sim p_d(\cdot\mid\mathbf z)}
\left[F(\mathbf z,\mathbf U)\right].
\label{eq:domain_conditional_mean}
\end{equation}

Define the corresponding conditional uncertainty as
\begin{equation}
v_d(\mathbf z)
:=
\mathbb E_d\!\left[
\left\|\mathbf Y-m_d^\star(\mathbf z)\right\|_{\mathrm F}^{2}
\mid\mathbf Z=\mathbf z
\right].
\label{eq:domain_conditional_uncertainty}
\end{equation}

\begin{theorem}[Domain-Wise Risk Decomposition]
\label{thm:domain_risk_decomposition}
For any deterministic predictor \(f\) with finite second moment and any
\(d\in\{\mathrm{tr},\mathrm{te}\}\), the population risk satisfies
\begin{equation}
\begin{aligned}
R_d(f)
&:=
\mathbb E_d\!\left[
\left\|\mathbf Y-f(\mathbf Z)\right\|_{\mathrm F}^{2}
\right]
\\
&=
\underbrace{
\mathbb E_d\!\left[
\left\|f(\mathbf Z)-m_d^\star(\mathbf Z)\right\|_{\mathrm F}^{2}
\right]
}_{\mathcal A_d(f)}
+
\underbrace{
\mathbb E_d[v_d(\mathbf Z)]
}_{\mathcal U_d},
\end{aligned}
\label{eq:domain_risk_decomposition}
\end{equation}
where \(\mathcal A_d(f)\) is the domain-specific target-approximation error and \(\mathcal U_d\) is the irreducible uncertainty induced by incomplete observation.
\end{theorem}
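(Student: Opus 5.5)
The plan is the standard bias--variance (Pythagorean) decomposition, carried out conditionally on \(\mathbf Z\) and then averaged over the marginal of \(\mathbf Z\) in domain \(d\). Fix \(d\) and write \(m_d^\star\) as in \eqref{eq:domain_conditional_mean}. Inside the norm I will add and subtract \(m_d^\star(\mathbf Z)\):
\begin{equation*}
\mathbf Y-f(\mathbf Z)=\bigl(\mathbf Y-m_d^\star(\mathbf Z)\bigr)+\bigl(m_d^\star(\mathbf Z)-f(\mathbf Z)\bigr).
\end{equation*}
Expanding the squared Frobenius norm, which is induced by the trace inner product \(\langle \mathbf A,\mathbf B\rangle=\operatorname{tr}(\mathbf A^{\mathsf T}\mathbf B)\), gives three terms. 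These are \(\|\mathbf Y-m_d^\star(\mathbf Z)\|_{\mathrm F}^2\), \(\|f(\mathbf Z)-m_d^\star(\mathbf Z)\|_{\mathrm F}^2\), and the cross term \(2\langle \mathbf Y-m_d^\star(\mathbf Z),\, m_d^\star(\mathbf Z)-f(\mathbf Z)\rangle\).

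Next I will take \(\mathbb E_d[\cdot\mid\mathbf Z]\) of each term. The first term yields \(v_d(\mathbf Z)\) by the definition \eqref{eq:domain_conditional_uncertainty}. The second term is \(\mathbf Z\)-measurable, because \(f\) is deterministic, so it passes through unchanged. For the cross term, the factor \(m_d^\star(\mathbf Z)-f(\mathbf Z)\) is a function of \(\mathbf Z\) alone, so by linearity of trace and expectation it can be pulled out of the conditional expectation. This leaves \(\langle \mathbb E_d[\mathbf Y\mid\mathbf Z]-m_d^\star(\mathbf Z),\,\cdot\,\rangle=0\), since \(m_d^\star(\mathbf Z)=\mathbb E_d[\mathbf Y\mid\mathbf Z]\). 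Finally, the tower property \(R_d(f)=\mathbb E_d\bigl[\mathbb E_d[\|\mathbf Y-f(\mathbf Z)\|_{\mathrm F}^2\mid\mathbf Z]\bigr]\) yields \(\mathcal A_d(f)+\mathcal U_d\).

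The only step needing care is integrability, which justifies splitting expectations and pulling out the \(\mathbf Z\)-measurable factor. I will state the implicit standing assumption that \(\mathbb E_d\|\mathbf Y\|_{\mathrm F}^2<\infty\), which is needed anyway for \(m_d^\star\) and \(v_d\) to be well defined, together with the hypothesis \(\mathbb E_d\|f(\mathbf Z)\|_{\mathrm F}^2<\infty\). By conditional Jensen, \(\mathbb E_d\|m_d^\star(\mathbf Z)\|_{\mathrm F}^2\le\mathbb E_d\|\mathbf Y\|_{\mathrm F}^2\). All three vectors \(\mathbf Y\), \(m_d^\star(\mathbf Z)\), \(f(\mathbf Z)\) therefore lie in \(L^2\), and Cauchy--Schwarz makes the cross term integrable. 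This legitimizes the "pull out what is known" rule, so the cross term has zero expectation. Equivalently, I can phrase the whole argument as orthogonality in \(L^2\): \(\mathbf Y-m_d^\star(\mathbf Z)\) is orthogonal to every square-integrable function of \(\mathbf Z\), and the claim follows from Pythagoras.

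Nothing beyond this is needed: the identity holds separately in each domain, and no relation between \(p_{\mathrm{tr}}\) and \(p_{\mathrm{te}}\) is used.
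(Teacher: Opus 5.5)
Your proposal is correct and follows the same route as the paper: add and subtract \(m_d^\star(\mathbf Z)\), eliminate the cross term using \(\mathbb E_d[\mathbf Y-m_d^\star(\mathbf Z)\mid\mathbf Z]=0\), and identify the two remaining terms as \(\mathcal A_d(f)\) and \(\mathcal U_d\). Your integrability argument (the standing assumption \(\mathbb E_d\|\mathbf Y\|_{\mathrm F}^2<\infty\), conditional Jensen, and Cauchy--Schwarz) supplies measure-theoretic details that the paper leaves implicit.
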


\begin{proof}
Adding and subtracting \(m_d^\star(\mathbf Z)\) in the squared error yields an approximation term, an uncertainty term, and a cross term. The cross term vanishes because
\[
\mathbb E_d[
\mathbf Y-m_d^\star(\mathbf Z)
\mid\mathbf Z
]=0.
\]
The remaining two terms are exactly \(\mathcal A_d(f)\) and \(\mathcal U_d\), which proves \eqref{eq:domain_risk_decomposition}.
\end{proof}

Since the same fixed predictor can be evaluated under both domains, subtracting their risk decompositions gives the following result.
\begin{corollary}[Train-Test Risk Gap]
\label{cor:train_test_risk_gap}
For any fixed deterministic predictor \(f\),
\begin{equation}
\begin{aligned}
R_{\mathrm{te}}(f)-R_{\mathrm{tr}}(f)
={}&
\underbrace{
\mathcal A_{\mathrm{te}}(f)-\mathcal A_{\mathrm{tr}}(f)
}_{\text{target-approximation gap}}
\\
&+
\underbrace{
\mathcal U_{\mathrm{te}}-\mathcal U_{\mathrm{tr}}
}_{\text{conditional-uncertainty gap}}.
\end{aligned}
\label{eq:train_test_risk_gap}
\end{equation}
\end{corollary}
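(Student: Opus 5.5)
The plan is to apply Theorem~\ref{thm:domain_risk_decomposition} twice, once with \(d=\mathrm{te}\) and once with \(d=\mathrm{tr}\), to the \emph{same} fixed predictor \(f\), and then subtract. The theorem holds for any deterministic \(f\) with finite second moment and for either domain index. So for a single fixed \(f\) we obtain
\[
R_{\mathrm{te}}(f)=\mathcal A_{\mathrm{te}}(f)+\mathcal U_{\mathrm{te}},
\qquad
R_{\mathrm{tr}}(f)=\mathcal A_{\mathrm{tr}}(f)+\mathcal U_{\mathrm{tr}}.
\]
Subtracting the second identity from the first and regrouping the approximation terms together and the uncertainty terms together gives \eqref{eq:train_test_risk_gap} directly.

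The one point that needs care is that the subtraction must be well defined. We should not be subtracting \(\infty-\infty\). I would therefore require, as an implicit hypothesis inherited from the theorem, that \(f(\mathbf Z)\) and \(\mathbf Y\) have finite second moments under both \(p_{\mathrm{tr}}\) and \(p_{\mathrm{te}}\). Under this hypothesis:
\begin{itemize}
\item \(R_d(f)<\infty\) for each \(d\);
\item \(\mathcal U_d\le \mathbb E_d\|\mathbf Y\|_{\mathrm F}^2<\infty\), because the conditional mean minimizes the conditional squared error, so in particular it does no worse than the zero predictor;
\item \(\mathcal A_d(f)<\infty\), either as the difference of two finite quantities or by the bound \(\|a-b\|^2\le 2\|a\|^2+2\|b\|^2\) together with Jensen's inequality applied to \(m_d^\star\).
\end{itemize}
All four terms are then finite real numbers, and the algebra is legitimate.

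I expect no real obstacle. The only subtlety worth stating is conceptual. The same \(f\) appears on both sides, but \(m_d^\star\) and \(v_d\) are domain-dependent, because the conditional law \(p_d(\cdot\mid\mathbf z)\) of the hidden factors and the marginal of \(\mathbf Z\) both differ across domains. For this reason \(\mathcal A_{\mathrm{te}}(f)-\mathcal A_{\mathrm{tr}}(f)\) does not simplify further. Both the conditional-uncertainty gap and the approximation gap can have either sign. The corollary is an exact identity, not a bound, so no such sign control is needed.
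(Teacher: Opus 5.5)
Your proposal is correct and matches the paper's argument: the paper likewise applies Theorem~\ref{thm:domain_risk_decomposition} to the same fixed \(f\) in both domains and subtracts the two decompositions. Your added check that all four terms are finite, so that no \(\infty-\infty\) arises, is a valid refinement that the paper leaves implicit.
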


Theorem~\ref{thm:domain_risk_decomposition} separates prediction risk into a predictor-dependent approximation term and predictor-independent uncertainty. Corollary~\ref{cor:train_test_risk_gap} further shows that the train-test risk difference is determined by domain-dependent changes in these two terms. If a predictor achieves a small training-domain approximation error, good generalization requires its approximation error not to increase substantially from the training domain to the test domain. By contrast, the conditional-uncertainty gap is determined by the observation and domain distributions and cannot be reduced through predictor design.

\begin{remark}
The squared loss is used to identify the conditional-mean target and obtain the exact risk decomposition. It does not require \(f\) to be trained using squared loss; the analysis applies to any resulting predictor with finite second moment, including one obtained using an \(\ell_1\) training objective.
\end{remark}

Reducing the predictor-dependent train-test gap requires learning relations that remain useful beyond the training samples. The physical propagation mechanism \(F\) provides a natural source of such guidance because its underlying propagation regularities are shared across environments and BS configurations. Accordingly, incorporating propagation knowledge into the predictor may reduce the finite-sample difficulty of discovering transferable relations and help maintain a small approximation gap across domains.

However, under incomplete observation, an implementable propagation prior can describe only selected or simplified aspects of \(F\). Let \(\Phi\) denote such a prior and let \(\mathcal H_{\Phi}\) denote the effective hypothesis class induced by it. Its domain-specific structural approximation bias can be expressed as
\begin{equation}
\mathcal B_{\Phi,d}
:=
\inf_{b\in\mathcal H_{\Phi}}
\mathbb E_d\!\left[
\left\|b(\mathbf Z)-m_d^\star(\mathbf Z)\right\|_{\mathrm F}^{2}
\right].
\label{eq:prior_induced_bias}
\end{equation}
When \(\mathcal B_{\Phi,d}>0\), even the best predictor permitted by the prior-guided base class cannot attain the conditional target. Thus, propagation priors may improve finite-sample generalization while introducing an approximation bias that limits the attainable predictor. This benefit-bias tradeoff motivates a correctable prior-guided predictor.

\subsection{RadioDecomp as a Correctable Prior-Guided Predictor}

Accordingly, let
\(f_{\mathrm{TIC}}\in\mathcal H_{\Phi}\) denote a propagation-prior-guided base predictor, referred to as the transferable interaction component (TIC). It is intended to capture propagation relations that remain useful across domains, but may deviate from the conditional target because of prior-induced bias, finite-sample learning, or imperfect optimization. To retain this propagation guidance while correcting its predictable discrepancy, we introduce \emph{RadioDecomp}:
\begin{equation}
\hat{\mathbf y}
=
f(\mathbf z)
=
f_{\mathrm{TIC}}(\mathbf z)
+
f_{\mathrm{RIC}}
\!\left(
\mathbf z,
f_{\mathrm{TIC}}(\mathbf z)
\right),
\label{eq:radio_decomp}
\end{equation}
where
\(f_{\mathrm{RIC}}(\mathbf z,f_{\mathrm{TIC}}(\mathbf z))\)
denotes the residual interaction component (RIC). The residual predictor exploits observable relations not captured by TIC and uses \(f_{\mathrm{TIC}}(\mathbf z)\) as a propagation-guided prediction anchor.

To formalize its correctability, consider a fixed structured base
\(f_{\mathrm{TIC}}\) and a residual hypothesis class \(\mathcal H_R\). Define
\begin{equation}
\begin{aligned}
\mathcal H_{\mathrm{RD}}(f_{\mathrm{TIC}})
=
\Big\{
f:\,
f(\mathbf z)
={}&
f_{\mathrm{TIC}}(\mathbf z)
\\
&+
r\!\left(
\mathbf z,
f_{\mathrm{TIC}}(\mathbf z)
\right),
\quad
r\in\mathcal H_R
\Big\}.
\end{aligned}
\label{eq:radiodecomp_hypothesis_class}
\end{equation}

\begin{proposition}[Deterministic Residual Correctability]
\label{prop:residual_correctability}
If the zero function belongs to the residual class,
\(0\in\mathcal H_R\), then, for any
\(d\in\{\mathrm{tr},\mathrm{te}\}\),
\begin{equation}
\inf_{f\in\mathcal H_{\mathrm{RD}}(f_{\mathrm{TIC}})}
R_d(f)
\le
R_d(f_{\mathrm{TIC}}).
\label{eq:radiodecomp_no_worse_than_base}
\end{equation}
Moreover, under squared loss, the pointwise population-optimal deterministic residual in domain \(d\) is
\begin{equation}
f_{\mathrm{RIC},d}^\star
\!\left(
\mathbf z,
f_{\mathrm{TIC}}(\mathbf z)
\right)
=
m_d^\star(\mathbf z)
-
f_{\mathrm{TIC}}(\mathbf z).
\label{eq:optimal_residual}
\end{equation}
\end{proposition}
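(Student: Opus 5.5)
The proof has two independent parts: an inclusion argument for \eqref{eq:radiodecomp_no_worse_than_base}, and a pointwise minimization that reuses \eqref{eq:population_conditional_mean} together with Theorem~\ref{thm:domain_risk_decomposition} for \eqref{eq:optimal_residual}.

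For the first part, the hypothesis \(0\in\mathcal H_R\) means we may take \(r\equiv 0\) in \eqref{eq:radiodecomp_hypothesis_class}. The resulting member of \(\mathcal H_{\mathrm{RD}}(f_{\mathrm{TIC}})\) is \(f(\mathbf z)=f_{\mathrm{TIC}}(\mathbf z)+0=f_{\mathrm{TIC}}(\mathbf z)\), so \(f_{\mathrm{TIC}}\in\mathcal H_{\mathrm{RD}}(f_{\mathrm{TIC}})\). The infimum of \(R_d\) over a set is at most its value at any element of that set, which gives \eqref{eq:radiodecomp_no_worse_than_base} for each \(d\). The only point to check is that \(R_d(f_{\mathrm{TIC}})\) is well defined, possibly as \(+\infty\); the inequality holds trivially if it is infinite.

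For the second part, I fix \(\mathbf z\) and write \(\mathbf b=f_{\mathrm{TIC}}(\mathbf z)\), which is deterministic given \(\mathbf z\). A residual output \(\hat{\mathbf r}\) then produces the prediction \(\mathbf b+\hat{\mathbf r}\), with conditional risk
\[
\mathbb E_d\!\left[\|\mathbf Y-\mathbf b-\hat{\mathbf r}\|_{\mathrm F}^2\mid\mathbf Z=\mathbf z\right].
\]
The map \(\hat{\mathbf r}\mapsto\mathbf b+\hat{\mathbf r}\) is a bijection of \(\mathbb R^{H\times W}\). Minimizing over \(\hat{\mathbf r}\) is therefore the same as minimizing over \(\widehat{\mathbf y}=\mathbf b+\hat{\mathbf r}\) in \eqref{eq:population_conditional_mean}, applied under \(p_d\). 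That minimization has unique solution \(m_d^\star(\mathbf z)\) from \eqref{eq:domain_conditional_mean}, so the optimal residual is \(m_d^\star(\mathbf z)-f_{\mathrm{TIC}}(\mathbf z)\).

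An equivalent route uses Theorem~\ref{thm:domain_risk_decomposition} directly. For \(f=f_{\mathrm{TIC}}+r\), the risk equals \(\mathbb E_d\|f_{\mathrm{TIC}}(\mathbf Z)+r-m_d^\star(\mathbf Z)\|_{\mathrm F}^2+\mathcal U_d\). Since \(\mathcal U_d\) does not depend on \(r\), the risk is minimized, and \(\mathcal A_d=0\) is reached, exactly when \(r=m_d^\star-f_{\mathrm{TIC}}\) almost surely.

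The main subtlety is interpretive. Because the residual takes \((\mathbf z,f_{\mathrm{TIC}}(\mathbf z))\) as input, and the second argument is a deterministic function of \(\mathbf z\), \(f_{\mathrm{RIC},d}^\star\) is a well-defined function of its inputs. I would state that the optimality is pointwise and over unrestricted measurable residuals. Whether it is attained within \(\mathcal H_R\) depends on the expressiveness of \(\mathcal H_R\), and is not claimed. Uniqueness holds \(p_d(\mathbf z)\)-almost everywhere, given finite second moments of \(\mathbf Y\).
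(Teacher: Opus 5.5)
Your proposal is correct and matches the paper's argument: the zero residual puts $f_{\mathrm{TIC}}$ in $\mathcal H_{\mathrm{RD}}(f_{\mathrm{TIC}})$, and because $f_{\mathrm{TIC}}(\mathbf z)$ is deterministic given $\mathbf z$, the optimal residual reduces to a conditional-mean problem. The paper takes the conditional mean of $\mathbf Y - f_{\mathrm{TIC}}(\mathbf Z)$, while you translate the decision variable, which is the same step. Your added remarks that optimality is over unrestricted residuals (not necessarily attained in $\mathcal H_R$) and that uniqueness holds almost everywhere are sound clarifications the paper leaves implicit.
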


\begin{proof}
Because \(0\in\mathcal H_R\), choosing the zero residual recovers
\(f_{\mathrm{TIC}}\), which proves
\eqref{eq:radiodecomp_no_worse_than_base}. For a fixed base,
\(f_{\mathrm{TIC}}(\mathbf Z)\) is deterministic given \(\mathbf Z\), and hence
\begin{equation}
\begin{aligned}
&\mathbb E_d[
\mathbf Y-f_{\mathrm{TIC}}(\mathbf Z)
\mid\mathbf Z=\mathbf z
]
\\
&\qquad=
m_d^\star(\mathbf z)
-
f_{\mathrm{TIC}}(\mathbf z).
\end{aligned}
\end{equation}
Since the conditional mean minimizes conditional squared loss,
\eqref{eq:optimal_residual} follows.
\end{proof}

Thus, RadioDecomp retains the prior-guided base as a special case while providing a deterministic correction path toward the domain-specific conditional target. The residual branch can correct the predictable approximation bias of the base.
\section{A RadioDecomp Instantiation: RadioLSR}

Following this principle, we instantiate RadioDecomp as \emph{RadioLSR}, where LSR stands for \emph{LoS-Shadow-Residual}. RadioLSR uses LoS-dominant and blockage-dominant shadowing effects as an observable-supported base estimate, rather than as a complete radiomap decomposition, and refines the remaining discrepancy through a residual part. Accordingly, RadioLSR is written as
\begin{equation}
\hat{\mathbf y}
=
f(\mathbf z)
=
f_{\mathrm{base}}(\mathbf z)
+
f_{\mathrm{res}}\!\big(\mathbf z,f_{\mathrm{base}}(\mathbf z)\big),
\label{eq:radiolsr_decomp}
\end{equation}
where \(f_{\mathrm{base}}\) and \(f_{\mathrm{res}}\) are the RadioLSR instantiations of \(f_{\mathrm{TIC}}\) and \(f_{\mathrm{RIC}}\), respectively. In RadioLSR, both predictors are realized at the prediction level with the corresponding spatial masks incorporated into their outputs.

\subsection{Physics-Aware Input Representation}

We instantiate RadioLSR on the U6G XL-MIMO Radiomap dataset \cite{li2026u6gxlmimoradiomapprediction}. For each sample, the observable input consists of a beam map \(\mathbf{B}\in\mathbb{R}^{H\times W}\), a height map \(\mathbf{H}\in\mathbb{R}^{H\times W}\), two directional edge maps \(\mathbf{E}^{(1)},\mathbf{E}^{(2)}\in\mathbb{R}^{H\times W}\) extracted from the height map along the two spatial axes, and a blockage score map \(\mathbf{S}\in\mathbb{R}^{H\times W}\). Let \(k\in\{1,\dots,HW\}\) denote the index of a spatial grid point in the \(H\times W\) observation domain. Then the \(k\)-th element of the beam map is defined as\cite{li2026u6gxlmimoradiomapprediction}
\begin{equation}
B_k
=
\frac{\lambda^2}{(4\pi)^2}P_t
\left|
\mathbf{w}^{H}\mathbf{H}^{\mathrm{LoS}}_k
\right|^2,
\end{equation}
which represents the analytically computed LoS beamforming power at grid point \(k\) under the given BS configuration and beamforming setting. The height map records building heights, while the two directional edge maps provide boundary cues along the horizontal and vertical grid directions.

Prediction is performed only over the valid non-building region. We therefore define a binary valid-region mask \(\mathbf{M}_{\mathrm{valid}}\in\{0,1\}^{H\times W}\), whose \((i,j)\)-th entry equals \(1\) if the corresponding grid cell belongs to the valid prediction region and \(0\) otherwise.

To facilitate the modeling of blockage-aware attenuation, we further construct a blockage score for each grid point \(k\) by sampling \(M\) points along the BS-grid line segment and comparing the direct-ray height with the corresponding building height:
\begin{equation}
S_k
=
\frac{1}{M}\sum_{m=1}^{M}
\Big[
h_{\mathrm{bld}}\!\big(\mathbf p_{\text{BS}}+t_m(\mathbf p_k-\mathbf p_{\text{BS}})\big)
-
h_{\mathrm{ray}}(t_m)
\Big]_+ ,
\label{eq:blockage_score}
\end{equation}
where \(\mathbf p_{\text{BS}}\) and \(\mathbf p_k\) denote the BS position and the horizontal position of grid point \(k\), respectively, \(\{t_m\}_{m=1}^{M}\subset(0,1)\) are uniformly sampled interpolation factors, \(h_{\mathrm{bld}}(\cdot)\) denotes the building height, \(h_{\mathrm{ray}}(t_m)\) is the ray height at interpolation point \(t_m\), and \([x]_+=\max(x,0)\). Thus, \(S_k\) characterizes the blockage severity along the BS-grid path. Based on \(\mathbf{S}\), we define two binary masks \(\mathbf{M}_{\mathrm{LoS}}, \mathbf{M}_{\mathrm{Shd}}\in\{0,1\}^{H\times W}\) as
\begin{equation}
\mathbf{M}_{\mathrm{LoS}}
=
\mathbb{I}(\mathbf{S}\le\epsilon)\odot \mathbf{M}_{\mathrm{valid}},
\qquad
\mathbf{M}_{\mathrm{Shd}}
=
\mathbb{I}(\mathbf{S}>\epsilon)\odot \mathbf{M}_{\mathrm{valid}},
\end{equation}
where \(\epsilon\) is a blockage-score threshold for separating LoS-dominant and blockage-dominant regions, and \(\mathbb{I}(\cdot)\) denotes the element-wise indicator function. Thus, \(\mathbf{M}_{\mathrm{LoS}}\) and \(\mathbf{M}_{\mathrm{Shd}}\) partition the valid region according to the blockage score.

\subsection{RadioLSR Architecture}
RadioLSR consists of three U-Net branches for LoS prediction, shadow prediction, and residual refinement. The LoS and Shadow branches jointly realize the structured base predictor \(f_{\mathrm{base}}\), while the residual branch realizes \(f_{\mathrm{res}}\). 
This design reflects the available input representation: coarse height map errors mainly affect LoS/shadow boundaries, whereas reflection-related patterns are more sensitive to fine geometry and surface orientation and are therefore deferred to residual refinement.

All three branches adopt the same U-Net template summarized in Table~\ref{tab:unet_backbone}, but differ in their input channels, prediction roles, and base channel width \(C_b\). The base branches use restricted inputs so that the base stage focuses on relatively stable and explicitly supported structure, while harder effects are deferred to refinement.

\subsubsection{Base Branch (LoS and Shadow)}
The base stage contains two U-Net branches. The LoS branch is intended to capture the more direct coverage trend induced by the BS configuration. Since this component is already strongly reflected by the beam map, its prediction is defined as
\begin{equation}
\hat{\mathbf y}_{\mathrm{LoS}}
=
f_{\mathrm{LoS}}(\mathbf B)\odot \mathbf M_{\mathrm{LoS}}.
\end{equation}
This restriction helps keep the branch focused on the most stable part of the structured base predictor.

The Shadow branch is intended to capture blockage-aware attenuation, which remains structured but depends more explicitly on obstruction cues. Its prediction is defined as
\begin{equation}
\hat{\mathbf y}_{\mathrm{Shd}}
=
f_{\mathrm{Shd}}([\mathbf B,\mathbf S])\odot \mathbf M_{\mathrm{Shd}}.
\end{equation}
Here, the blockage score provides a compact cue for obstruction severity, while the beam map preserves the configuration-dependent coverage tendency. This restricted input design encourages the branch to focus on blockage-related attenuation.

The structured base predictor is then defined as
\begin{equation}
f_{\mathrm{base}}(\mathbf z)
:=
\hat{\mathbf y}_{\mathrm{LoS}}+\hat{\mathbf y}_{\mathrm{Shd}}.
\label{eq:radiolsr_base}
\end{equation}
Equivalently,
\begin{equation}
f_{\mathrm{base}}(\mathbf z)
=
f_{\mathrm{LoS}}(\mathbf B)\odot \mathbf M_{\mathrm{LoS}}
+
f_{\mathrm{Shd}}([\mathbf B,\mathbf S])\odot \mathbf M_{\mathrm{Shd}}.
\end{equation}
Thus, \(f_{\mathrm{base}}\) realizes the TIC stage in RadioLSR through masked aggregation of the two base-branch predictions.

\subsubsection{Residual Refinement}
The residual stage contains one additional U-Net branch that refines the remaining discrepancy in the dB domain. Let the residual U-Net branch be denoted by \(g_{\mathrm{res}}(\cdot)\). We define the residual predictor as
\begin{equation}
f_{\mathrm{res}}(\mathbf z,f_{\mathrm{base}}(\mathbf z))
:=
g_{\mathrm{res}}([\mathbf B,\mathbf H,\mathbf E^{(1)},\mathbf E^{(2)},f_{\mathrm{base}}(\mathbf z)])
\odot \mathbf M_{\mathrm{valid}}.
\label{eq:radiolsr_res}
\end{equation}
Accordingly, the final prediction is
\begin{equation}
\hat{\mathbf y}
=
f_{\mathrm{base}}(\mathbf z)
+
f_{\mathrm{res}}(\mathbf z,f_{\mathrm{base}}(\mathbf z)).
\end{equation}

Conditioning the residual U-Net on \(f_{\mathrm{base}}(\mathbf z)\) is consistent with the additive composition in the dB domain, so refinement is learned relative to the current attenuation level rather than as an independent full-map prediction. The residual branch is not additionally fed with blockage score, since blockage-aware effects have already been assigned to the base stage.

\begin{table}[t]
\vspace{4pt}
\centering
\caption{Unified U-Net backbone of the RadioLSR branches.}
\label{tab:unet_backbone}
\setlength{\tabcolsep}{4pt}
\begin{tabular}{c|c}
\toprule
\textbf{Stage} & \textbf{Specification} \\
\midrule
Encoder & DoubleConv $(C_b)$, then 3$\times$[MaxPool + DoubleConv] \\
        & with channels $2C_b,\,4C_b,\,8C_b$ \\
Bottleneck & DoubleConv $(8C_b)$ \\
Decoder & 3$\times$[Up + skip concat + DoubleConv] \\
        & with channels $4C_b,\,2C_b,\,C_b$, then $1{\times}1$ Conv $(1)$ \\
\bottomrule
\end{tabular}
\end{table}

\subsection{Training Objective}

Let \(\mathbf y\) denote the ground-truth normalized radiomap. We train RadioLSR with a masked \(\ell_1\) loss on the final prediction together with auxiliary supervision on the two base branches (LoS and Shadow) and the residual branch:
\begin{equation}
\mathcal L
=
\mathcal L_{\mathrm{main}}
+\lambda_{\mathrm{LoS}}\mathcal L_{\mathrm{LoS}}
+\lambda_{\mathrm{Shd}}\mathcal L_{\mathrm{Shd}}
+\lambda_{\mathrm{res}}\mathcal L_{\mathrm{res}}.
\end{equation}

where\footnote{For numerical stability, the denominator of each masked-average loss is clamped to be at least \(1\).}
\(\hat{\mathbf y}_{\mathrm{res}}=f_{\mathrm{res}}(\mathbf z,f_{\mathrm{base}}(\mathbf z))\), and
{\small
\begin{equation}
\begin{aligned}
\mathcal L_{\mathrm{main}}
&={\|(\hat{\mathbf y}-\mathbf y)\odot\mathbf M_{\mathrm{valid}}\|_1}/{\|\mathbf M_{\mathrm{valid}}\|_1},\\
\mathcal L_{\mathrm{LoS}}
&={\|(\hat{\mathbf y}_{\mathrm{LoS}}-\mathbf y)\odot\mathbf M_{\mathrm{LoS}}\|_1}/{\|\mathbf M_{\mathrm{LoS}}\|_1},\\
\mathcal L_{\mathrm{Shd}}
&={\|(\hat{\mathbf y}_{\mathrm{Shd}}-\mathbf y)\odot\mathbf M_{\mathrm{Shd}}\|_1}/{\|\mathbf M_{\mathrm{Shd}}\|_1},\\
\mathcal L_{\mathrm{res}}
&={\|(\hat{\mathbf y}_{\mathrm{res}}-\mathbf y_{\mathrm{res}}^\star)\odot\mathbf M_{\mathrm{valid}}\|_1}/{\|\mathbf M_{\mathrm{valid}}\|_1}.
\end{aligned}
\end{equation}}

The residual target is defined as
\begin{equation}
\mathbf y_{\mathrm{res}}^\star
=
(\mathbf y-\mathrm{sg}[f_{\mathrm{base}}(\mathbf z)])\odot \mathbf M_{\mathrm{valid}},
\end{equation}
where \(\mathrm{sg}[\cdot]\) denotes the stop-gradient operation. Here, \(\mathcal L_{\mathrm{main}}\) supervises the final composed prediction, while the other three terms encourage LoS, shadow, and residual specialization. Optimization is performed using AdamW with a validation-based ReduceLROnPlateau scheduler \cite{al2022scheduling}.

\section{Experiments}
The experiments evaluate RadioLSR and examine whether a RadioDecomp-guided structured predictor generalizes better than a monolithic counterpart under incomplete observation.
\subsection{Experimental Setup}
Experiments are conducted on the U6G XL-MIMO Radiomap dataset \cite{li2026u6gxlmimoradiomapprediction}, which contains 78,400 radiomaps from 800 urban scenes and 98 BS configurations. We consider two blind-prediction protocols: \emph{cross-config}, which splits data by BS configuration, and \emph{cross-env}, which splits data by environment. For both protocols, six train/validation/test ratios are used: $2/1/7$, $3/1/6$, $4/1/5$, $5/1/4$, $6/1/3$, and $7/1/2$.
\begin{table*}[!t]
\vspace{4pt}
\centering
\caption{Train MAE and test MAE/RMSE (dB) of RadioLSR-32 and MonoUNet-32 under the cross-config and cross-env splits.}
\label{tab:main_results_combined}
\setlength{\tabcolsep}{3.2pt}
\begin{tabular}{c|ccc|ccc|ccc|ccc}
\toprule
\multirow{3}{*}{\textbf{Train/Val/Test}}
& \multicolumn{6}{c|}{\textbf{Cross-config split}}
& \multicolumn{6}{c}{\textbf{Cross-env split}} \\
\cline{2-13}
& \multicolumn{3}{c|}{\textbf{RadioLSR-32}}
& \multicolumn{3}{c|}{\textbf{MonoUNet-32}}
& \multicolumn{3}{c|}{\textbf{RadioLSR-32}}
& \multicolumn{3}{c}{\textbf{MonoUNet-32}} \\
\cline{2-13}
& \textbf{Train}
& \textbf{Test}
& \textbf{Test}
& \textbf{Train}
& \textbf{Test}
& \textbf{Test}
& \textbf{Train}
& \textbf{Test}
& \textbf{Test}
& \textbf{Train}
& \textbf{Test}
& \textbf{Test} \\
& \textbf{MAE}
& \textbf{MAE}
& \textbf{RMSE}
& \textbf{MAE}
& \textbf{MAE}
& \textbf{RMSE}
& \textbf{MAE}
& \textbf{MAE}
& \textbf{RMSE}
& \textbf{MAE}
& \textbf{MAE}
& \textbf{RMSE} \\
\midrule
2/1/7 & \textbf{1.8044} & \textbf{3.5673} & \textbf{5.8391} & 2.2042 & 3.8999 & 6.4388 & \textbf{2.3412} & \textbf{4.3533} & \textbf{8.4919} & 2.4091 & 4.4039 & 8.5521 \\
3/1/6 & \textbf{1.8543} & \textbf{3.5525} & \textbf{5.6258} & 2.0091 & 4.2193 & 6.6179 & \textbf{2.3224} & \textbf{4.1862} & 8.5254 & 2.3499 & 4.3337 & \textbf{8.4838} \\
4/1/5 & \textbf{1.6079} & \textbf{3.6721} & \textbf{5.5804} & 1.9079 & 4.0620 & 6.2850 & 2.6196 & \textbf{4.1241} & \textbf{8.1609} & \textbf{2.4224} & 4.1252 & 8.2251 \\
5/1/4 & \textbf{1.5868} & \textbf{3.7785} & \textbf{5.6960} & 1.8445 & 4.3276 & 6.3687 & 2.4463 & \textbf{4.0494} & \textbf{8.0615} & \textbf{2.0895} & 4.2097 & 8.4318 \\
6/1/3 & \textbf{1.5893} & \textbf{4.1106} & \textbf{5.8316} & 1.7914 & 4.3287 & 6.3111 & \textbf{2.3853} & \textbf{3.9533} & \textbf{7.9311} & 2.4924 & 4.0207 & 8.0607 \\
7/1/2 & \textbf{1.6089} & \textbf{4.8613} & \textbf{6.4511} & 1.9006 & 5.0672 & 6.9664 & 2.1774 & \textbf{4.0771} & \textbf{8.1701} & \textbf{1.9845} & 4.2538 & 8.6370 \\
\bottomrule
\end{tabular}
\end{table*}
\begin{figure}[t]
    \centering
\includegraphics[width=0.6\linewidth]{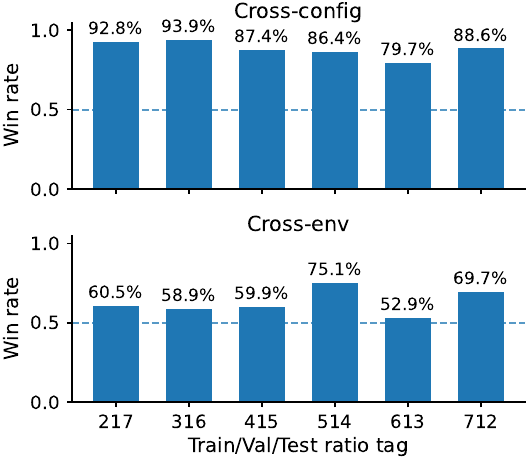}
\caption{Sample-level win rate of RadioLSR-32 over MonoUNet-32 under six train-ratio settings for the cross-config and cross-env splits.}
    \label{fig:placeholder}
\end{figure}

\begin{table*}[!t]
\centering
\caption{Aggregated test-sample RMSE statistics (dB) over all considered train-ratio settings under the cross-config and cross-env splits, where P50, P90, and P95 denote the 50th, 90th, and 95th percentiles, respectively.}
\label{tab:rmse_dist_beam_scene}
\setlength{\tabcolsep}{4pt}
\begin{tabular}{l|ccccc|ccccc}
\toprule
\multirow{2}{*}{\textbf{Method (Parameter Num)}}
& \multicolumn{5}{c|}{\textbf{Cross-config split} (test \(217{,}600\) radiomaps)}
& \multicolumn{5}{c}{\textbf{Cross-env split} (test \(211{,}680\) radiomaps)} \\
\cline{2-11}
& \textbf{Mean} & \textbf{Std} & \textbf{P50} & \textbf{P90} & \textbf{P95}
& \textbf{Mean} & \textbf{Std} & \textbf{P50} & \textbf{P90} & \textbf{P95} \\
\midrule
RadioLSR-32 (4.6960M) & \textbf{5.5438} & \textbf{1.8636} & \textbf{5.5328} & \textbf{7.8895} & \textbf{8.5605} & 7.4484 & 4.1423 & 6.7295 & 12.9398 & 14.6668 \\
RadioLSR-48 (8.6039M) & 5.6205 & 1.8905 & 5.5497 & 7.9348 & 8.7384 & \textbf{7.4001} & \textbf{4.0835} & \textbf{6.7244} & \textbf{12.8139} & \textbf{14.5418} \\
MonoUNet-32 (3.1296M) & 6.2092 & 2.1128 & 6.1795 & 8.9076 & 9.7229 & 7.6014 & 4.1401 & 6.8462 & 13.4127 & 15.0610 \\
MonoUNet-48 (7.0375M) & 5.7783 & 1.9786 & 5.7812 & 8.2649 & 9.0100 & 7.5080 & 4.0797 & 6.7620 & 13.2766 & 14.9046 \\
MonoUNet-64 (12.5076M) & 5.6770 & 1.9618 & 5.6744 & 8.1225 & 8.8935 & 7.4873 & 4.1230 & 6.7381 & 13.2967 & 14.9911 \\
\bottomrule
\end{tabular}
\end{table*}

\begin{table}[t]
\centering
\caption{Ablation under 6/1/3 split, comparing full RadioLSR with its TIC-only variant in the cross-config and cross-env settings.}
\label{tab:ablation_tic_only_613}
\setlength{\tabcolsep}{4pt}
\begin{tabular}{c|cc|cc}
\toprule
\multirow{2}{*}{\textbf{Method}}
& \multicolumn{2}{c|}{\textbf{Cross-config split}}
& \multicolumn{2}{c}{\textbf{Cross-env split}} \\
\cline{2-5}
& \textbf{MAE} & \textbf{RMSE}
& \textbf{MAE} & \textbf{RMSE} \\
\midrule
RadioLSR-32 & \textbf{4.1106} & \textbf{5.8316} & \textbf{3.9533} & \textbf{7.9311} \\
TIC only    & 5.5327 & 8.8174 & 4.9422 & 10.4024 \\
\midrule
Error reduction & 1.4221 & 2.9858 & 0.9889 & 2.4713 \\
\bottomrule
\end{tabular}
\end{table}

\begin{figure*}[!t]
    \centering
    \includegraphics[width=0.85\linewidth]{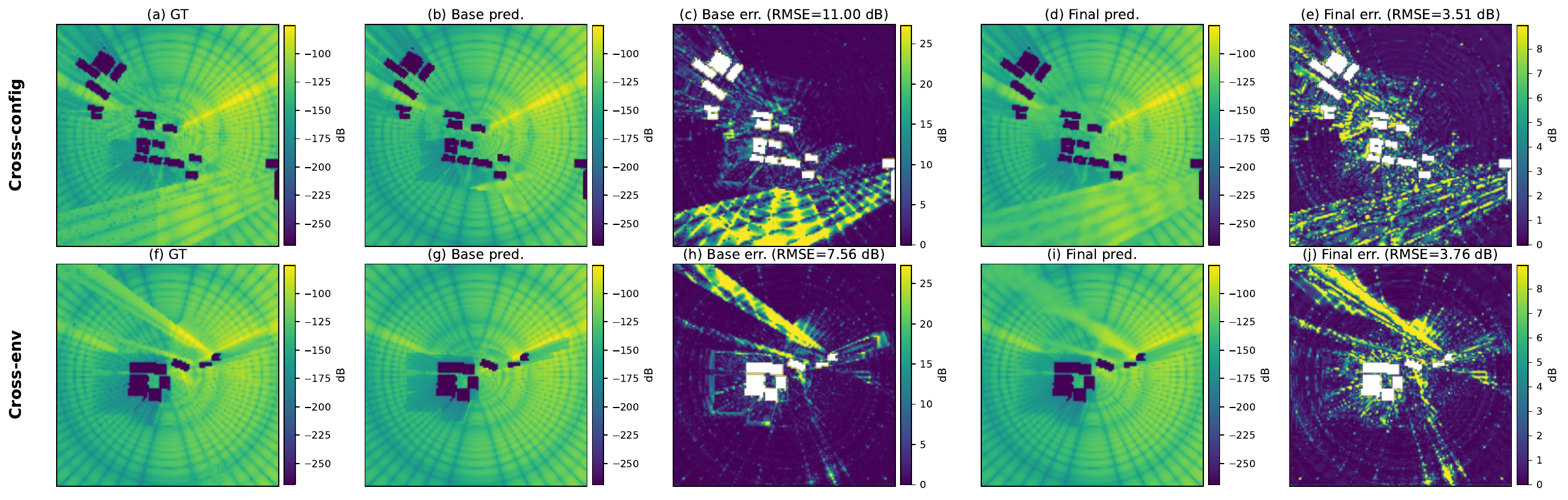}
\caption{Qualitative results of one radiomap sample for RadioLSR under the cross-configuration and cross-environment splits, including the ground truth, base prediction, base error, final prediction, and final error.}
\label{fig:radiolsr_qualitative}
\end{figure*}
Alongside RadioLSR, we also construct a monolithic counterpart, termed \emph{MonoUNet}, to provide a controlled reference under the same input information and closely related backbone family. Both models use the same observable inputs, including the beam map, height map, edge features, and blockage score. For RadioLSR, the LoS and Shadow branches use a base channel width $C_b$ in Table~\ref{tab:unet_backbone} of 16, while the residual-branch width follows the model suffix; thus, RadioLSR-32 and RadioLSR-48 use residual widths of 32 and 48, respectively. For MonoUNet, the suffix denotes the base channel width of the monolithic U-Net, yielding MonoUNet-32, MonoUNet-48, and MonoUNet-64. MonoUNet is trained with the same masked \(\ell_1\) radiomap loss together with an additional Sobel-based structural loss to preserve spatial structure.

The blockage threshold is set to \(\epsilon=10^{-6}\). For RadioLSR, \(\lambda_{\mathrm{LoS}}=\lambda_{\mathrm{Shd}}=\lambda_{\mathrm{res}}=0.5\). All models are trained for 60 epochs with initial learning rate $10^{-3}$, and the best validation checkpoint is used for evaluation. This does not conflict with Section II: the squared loss there defines the conditional mean predictor, while the training losses here only determine the learned predictor \(f(\mathbf z)\). Accordingly, Mean Absolute Error (MAE) is emphasized for optimization consistency, while Root Mean Square Error (RMSE) is also reported to reflect the squared-loss perspective in the theoretical analysis.

\subsubsection{RadioDecomp-Guided Main Comparison}
We first compare the RadioDecomp-guided model RadioLSR-32 with its monolithic counterpart MonoUNet-32 under matched inputs and closely related backbones. Table~\ref{tab:main_results_combined} shows that RadioLSR-32 maintains comparable or lower training MAE in most settings, indicating that structured reparameterization does not weaken train-domain fitting. On the test sets, it consistently achieves lower MAE and RMSE under cross-config and overall better performance under cross-env, with only limited exceptions. Fig.~\ref{fig:placeholder} further supports this trend: RadioLSR wins on a clear majority of samples in all cross-config settings and remains above 50\% in all cross-env settings.

\subsubsection{Capacity and Reparameterization Analysis}
We next examine whether the gain can be explained solely by increasing monolithic capacity. Table~\ref{tab:rmse_dist_beam_scene} shows that widening MonoUNet from 32 to 48 and 64 generally improves RMSE, especially under cross-config. However, the best overall results are still achieved by RadioLSR-32 or RadioLSR-48 rather than by wider MonoUNet baselines. The structured models maintain lower mean RMSE under both splits, with advantages also in P90 and P95. Although the gaps are modest, the comparison uses identical observable inputs and a large number of test radiomaps, supporting the benefit of structured reparameterization beyond monolithic scaling.

\subsubsection{Ablation on the Role of the RIC Refinement}
To examine the refinement role implied by RadioDecomp, we compare the full RadioLSR-32 model with its TIC-only variant under the 6/1/3 split. As shown in Table~\ref{tab:ablation_tic_only_613}, removing RIC causes clear degradation under both split protocols: the full model reduces MAE/RMSE by 1.4221/2.9858 dB under cross-config and by 0.9889/2.4713 dB under cross-env. This indicates that, within the RadioDecomp view, structured base estimation alone is insufficient, while the RIC stage provides an essential refinement path for the remaining discrepancy.

\subsubsection{Qualitative Analysis of the Structured Prediction Path}
Fig.~\ref{fig:radiolsr_qualitative} provides qualitative evidence for the structured prediction path motivated by RadioDecomp. In both cross-config and cross-env examples, the LoS-Shadow base captures the dominant large-scale coverage pattern, while the residual branch contributes localized corrections in more difficult regions, including reflection-related variations supported by the available geometric cues. This behavior is consistent with the intended roles of base estimation and residual refinement in RadioLSR.

\section{Conclusion}
This paper revisited deterministic radiomap blind prediction under incomplete observation. Under squared loss, we identified the conditional-mean radiomap as its population-optimal target, decomposed domain risk into approximation error and irreducible uncertainty, and characterized the train-test risk gap through their domain-wise changes. Since propagation priors offer cross-domain guidance but may bias the attainable predictor, we proposed RadioDecomp, which combines a prior-guided TIC base with a deterministic RIC, and instantiated it as RadioLSR. Experiments showed that RadioLSR is especially effective for cross-config generalization and provides overall benefits under cross-env generalization over a controlled monolithic reference. Future work may include generative RIC modeling: its conditional mean could retain base-to-target correction while its samples represent the predictive variability induced by incomplete observation.

\bibliographystyle{IEEEtran}
\bibliography{bibtex/bib/IEEEexample}

\end{document}